\documentclass[conference]{IEEEtran}
\usepackage{graphicx}
\usepackage{amsmath,amssymb,epsfig,xcolor} 
\usepackage{subfig}
\usepackage{tabularx,multirow,array}
\usepackage{times}
\usepackage{booktabs}
\usepackage{multirow}
\usepackage{soul}
\usepackage{enumerate}
\usepackage{xcolor}
\usepackage[countmax]{subfloat}
\usepackage{algpseudocode}
\usepackage{algorithm}
\usepackage{threeparttable}
\pagenumbering{arabic}
\usepackage{cite}
\usepackage{url}
\usepackage{amsthm}

\setlength{\abovedisplayskip}{3pt}
\setlength{\belowdisplayskip}{3pt}
\setlength{\columnsep}{0.22 in}
\theoremstyle{plain}
 \newtheorem{thm}{Theorem}

\SetMathAlphabet{\mathcal}{bold}{OMS}{cmsy}{b}{n}
\begin{document}
\title{Modeling of Deep Neural Network (DNN) Placement and Inference in Edge Computing}
\author{
    \IEEEauthorblockN{Mounir Bensalem, Jasenka Dizdarevi{\'{c}}  and Admela~Jukan}
    \IEEEauthorblockA{Technische Universit\"at Braunschweig, Germany}
     \IEEEauthorblockA{\{mounir.bensalem, j.dizdarevic, a.jukan\}@tu-bs.de}
}
\maketitle
\begin{abstract}

With the edge computing becoming an increasingly adopted concept in system architectures, it is expected its utilization will be additionally heightened when combined with deep learning (DL) techniques. The idea behind integrating demanding processing algorithms in Internet of Things (IoT) and edge devices, such as Deep Neural Network (DNN), has in large measure benefited from the development of edge computing hardware, as well as from adapting the algorithms for use in  resource constrained IoT devices. Surprisingly, there are no models yet to optimally place and use machine learning in edge computing. In this paper, we propose the first model of optimal placement of Deep Neural Network (DNN) Placement and Inference in edge computing. We present a mathematical formulation to the DNN Model Variant Selection and Placement (MVSP) problem considering the inference latency of different model-variants, communication latency between nodes, and utilization cost of edge computing nodes. We evaluate our model numerically, and show that for low load increasing model co-location decreases the average latency by 33\% of millisecond-scale per request, and for high load, by  21\%.
\end{abstract}

\section{Introduction}
The potential benefits of edge computing paradigm and related distributed system solutions, have been particularly linked with the breakthroughs achieved in the fast growing development of deep learning (DL) techniques designed to boost automation in all application domains. With that in mind, this vibrant research area has been more and more focusing on integrating edge computing with deep learning \cite{Han2019} and the associated challenges due to resource constraints \cite{wu2019machine, Fowers2018}.  
Recent hardware developments are making more and more possible to run highly computationally demanding algorithms in the edge \cite{Ramneek}. 

Among myriad of open research issues,  the models for machine learning (ML) inference latency and ML model selection optimization in edge computing, along with related task placement are of particular importance. This is because the related  such models \cite{romero2019infaas} developed for cloud computing cannot be directly applied in edge computing. The DNN placement problem in the edge needs to consider in particular the communication delay between nodes and the hardware heterogeneity of devices. To the best of our knowledge there has been no study of the DNN application selection, placement and inference serving problem in consideration of edge computing. This paper presents the first DNN Model Variant Selection and Placement (MVSP) in edge computing networks. We provide a mathematical formulation of the problems of ML placement and inference service, considering inference latency of different model-variants,  communication latency between nodes and utilization cost of edge computing nodes (resources). Our model also includes a discussion on the potential effects of hardware sharing, with GPU edge computing nodes shared between different model-variants, on inference latency. 


An illustration of the DNN application placement problem is presented in Figure \ref{fig:arch1} with the arrival of inference requests from IoT nodes to the edge computing layer.  IoT nodes are assumed to be devices with processing and sensing capabilities, but not enough to run DNN models. In this system abstraction, edge computing layer, consisting of edge nodes with GPUs for running ML models,  serves as an inference service system to the requests from IoT nodes. For the illustrated system we focus on designing a placement strategy of ML models, taking into account different possibilities of model-variants and how to forward requests coming from IoT nodes.

%

\begin{figure}
 \centering
   \includegraphics[scale=0.45]{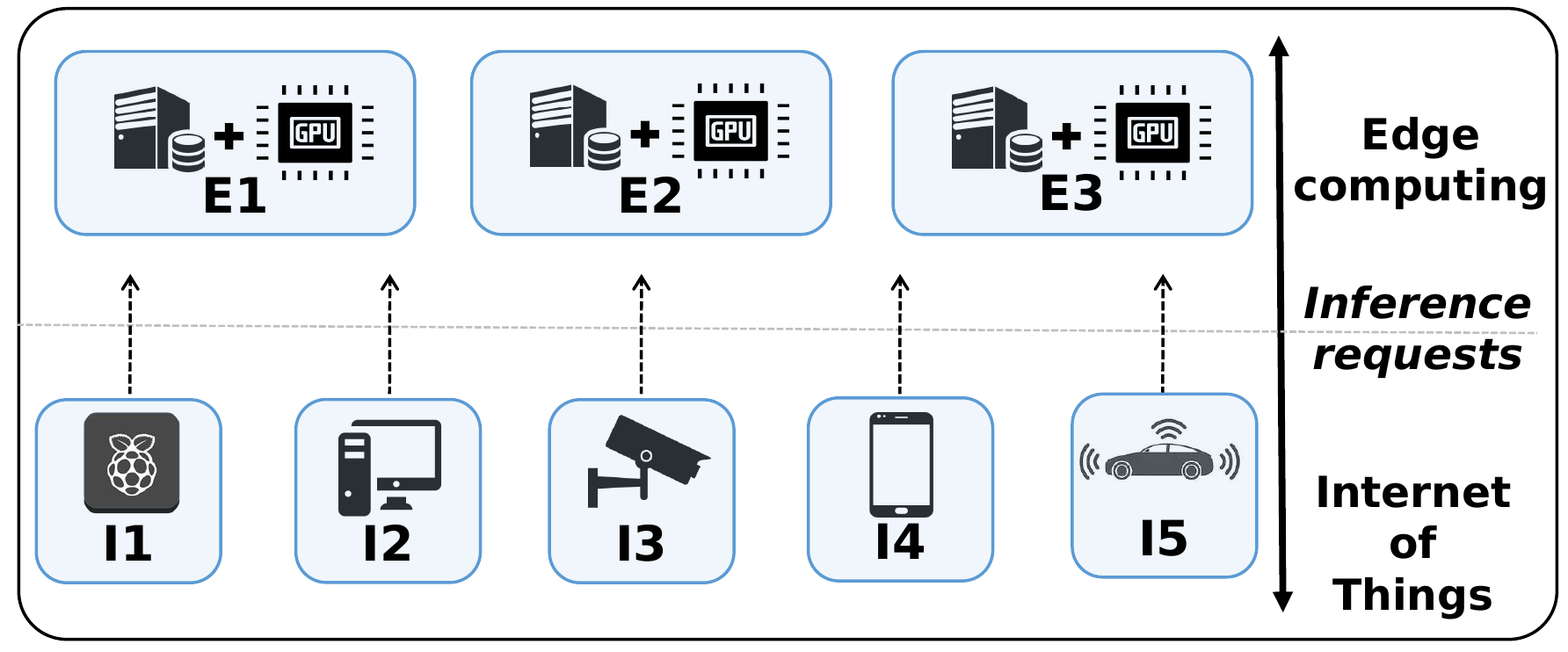}
 \caption{A reference inference service system}
\label{fig:arch1}
\vspace{-0.2 cm}
\end{figure}
\begin{figure*}
 \centering
   \includegraphics[scale=0.8]{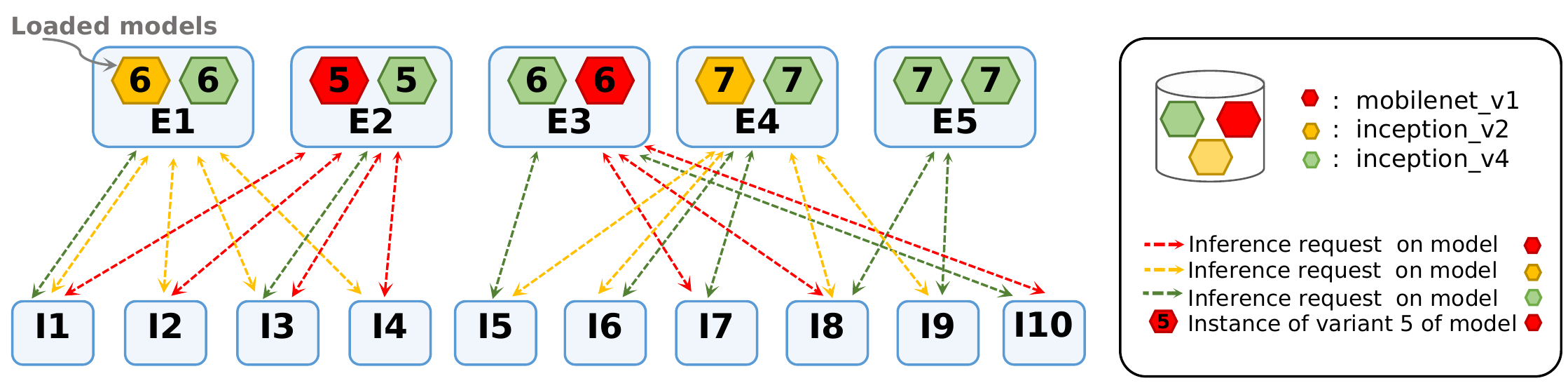}
 \caption{An illustrative example of DNN application placement problem and assignment of inference request in edge nodes}
\label{fig:arch2}
\vspace{-0.2 cm}
\end{figure*}

The remainder of this paper is organized as follows. Section \ref{sec:SM} introduces a mathematical model for the MVSP problem.  Section \ref{sec:results} numerically evaluates the proposed model. We conclude the paper in \ref{sec:conclusion}.

\section{System Model}\label{sec:SM}
\subsection{Reference Edge Computing Network Model}
In order to analyze MVSP problem in edge computing network we define a system model which will consider inference latency of different model-variant with shared and unshared access to GPUs, node communication latency and utilization cost. 
The considered system consists of $N_I$ IoT nodes, e.g. smart phone, security camera and smart car cameras, and  $N_E$ edge nodes, e.g. access points. Let $\mathcal{N_I}=\{1,...,N_I\}$ and $\mathcal{N_E}=\{1,...,N_E\}$  denote the set of indexes of IoT nodes and edge nodes, respectively. Edge nodes are able to host various ML applications designated to serve the inference requests coming from IoT nodes. Every edge node $e\in \mathcal{N_E}$ has a computing unit specific for inference serving tasks e.g. CPU, GPU and TPU as well as memory capacity $C^{e}$.  We assume that we have $M$ different ML models that can be used for different tasks such as face recognition and object detection. Each ML model $m$ can have $V_m$ variants with different sizes and inference latencies per request and can be deployed via a VM or a container. We denote  by $\mathcal{M}=\{1,...,M\}$ the set of ML models and  $\mathcal{V}_m=\{1,...,V_m\}$ the set of variants of model $m$. Each model variant ($m,v$) has a minimum memory requirement  $R_{mv}$ to be loaded and can process at most $Load_{mv}$ with a stable performance. Each IoT node $i$ can define its own latency requirement $Lmax_{m}^{i}$ for each infered model $m$ as well as the number of inference requests $r_m^i$.  The notations used in this paper are summarized in Table \ref{tab:TableOfNotationForMyResearch}.
We introduce a binary variable $x_{mv}^{ie}$ to indicate the forwarding decision of requests of model-variant ($m,v$) from IoT node $i\in \mathcal{N_I} $ to edge node $e\in \mathcal{N_E}$. The placement decision of model-variant ($m,v$) in an edge node $e$ is defined by an integer variable $n_{mv}^{e}$, which indicates the number of deployed instances. \\
Figure \ref{fig:arch2} shows an illustrative example of a network of $10$ IoT nodes ($I1,...,I10$) and $5$ edge nodes ($E1,...,E5$) for the above described system. In this example, each edge node stores the 3 ML models and can instantiate them during loading various model-variants by changing the batch size parameter, which affects the instance size and throughput. The figure shows how from this set of 3 models, the optimally selected model-variants would be placed in edge nodes after the placement decisions have been made, along with served inference requests. For example, after placement edge node E1 has two loaded models, \emph{inception\_v2} and \emph{inception\_v4} and five served inference requests, four for \emph{inception\_v2} and one for \emph{inception\_v4}. Serving of inference requests is achieved by assigning them to the appropriate edge node by considering the latency requirements, capacity and the cost of using servers. We assume that each IoT node consumes these 3 different ML models with different request rates.

\subsection{Latency Model}
We consider two types of latencies: communication latency between IoT and edge nodes, and inference latency of model-variants in edge nodes. We denote by $CL^{ie}$ the communication latency between IoT node $i$ and edge node $e$. The inference latency of a model-variant ($m,v$) running exclusively on edge node $e$ is denoted by $L_{mv}^{e}$. We define $IL_{mv}^{e}$ as the inference latency of a model-variant ($m,v$) running on edge node $e$. For mathematical model of this latency we include the effects that sharing with other model-variants can have, as well as observing the case with unshared access to GPU. With that in mind, we assume in our formulation that an edge node can be shared by at most $K$ model-variants. The average latency per request is given by:

\begin{equation}\label{eq:avglatency}
\begin{split}
L = & \frac{1}{\sum_{i\in \mathcal{N_I}} \sum_{m\in \mathcal{M}} r_{m}^{i}}\\& \sum_{i\in \mathcal{N_I}} \sum_{e\in \mathcal{N_E}} \sum_{m\in \mathcal{M}} \sum_{v\in \mathcal{V}_m}  r_{m}^{i}(CL_{i}^{e} + IL_{mv}^{e}  ) x_{mv}^{ie}
\end{split}
\end{equation}
\\
The communication latency between node $i\in \mathcal{N_I}$ and  node $e\in \mathcal{N_E}$ is considered as the sum of the delay on each link in the shortest path in both directions (sending request and receiving response). The delay $d_l$ on each link $l$ is assumed to have a random value with an average $\mu$, including all possible existing delays in the link i.e. transmission, queuing, propoagation and processing. We denote by $P_{ie}$ the set of links in the shortest path between node $i$ and $e$.\\
\begin{equation}
CL_{i}^{e}=\sum_{l\in P_{ie}}d_{l}
\end{equation}

We model the inference latency of a model-variant ($m,v$) running on edge node $e\in \mathcal{N_E}$, such that: the inference latency of a model-variant increases linearly in terms of the latency of co-located model-variants. A discussion on resource sharing is provided in subsection \ref{sec:DRS}. The expression of inference latency is given by:\\

\begin{equation}
\begin{split}
IL_{mv}^{e} = & L_{mv}^{e}+ \alpha_{mv} L_{mv}^{e} (n_{mv}^{e} - 1) \\ & + \sum_{m'\in M} \sum_{\substack{v'\in \mathcal{V}_m \\ v'\neq v}} \alpha_{m'v'} L_{m'v'}^{e} n_{m'v'}^{e}
\end{split}
\end{equation}

where the inference latency is the sum of the inference latency of a model-variant running exclusively on an edge node ($L_{mv}^{e}$), the additional latency created by replication and the additional latency created by co-locating a different model on the same node.\\

IoT nodes (users) are assumed to express their latency requirements for the inference of a model $m\in \mathcal{M}$ with a latency requirement constraint  given by:
\begin{equation}\label{cons:latencyreq}\begin{split}
 \sum_{v\in \mathcal{V}_m}  (CL_{i}^{e} + IL_{mv}^{e}) x_{mv}^{ie}  \leq Lmax_{m}^{i}\\ \;\; \forall i \in \mathcal{N_I}, \;\forall e \in \mathcal{N_E}, \; \forall m \in \mathcal{M}
 \end{split}
\end{equation}
This constraint assumes that the round trip time (RTT) cannot exceed a maximum value of latency given by the user as a requirement $Lmax_{m}^{i}$. In our case, RTT is the sum of the communication delay (cumulative delay among the path) and the processing delay of inference request in the edge node.
\subsection{Utilization Cost Model}
The utilization cost model is an abstract formulation of all the costs induced from the utilization of edge resources, assuming that such a cost would increase with the increase of resource utilization. As an example, the utilization cost can represent the power consumption and energy efficiency measurements in a unity of power (Watt), considering different hardware components such as CPU, GPU, memory, and I/O. \cite{JetsonAGXbenchmarks} shows some inference benchmarks of several DNN model-variant using Jetson AGX Xavier, Nvidia GPU. 

For the sake of generality, we define a continuous variable $z^e$ denoting the utilization cost of a node $e\in \mathcal{N_E}$. The average utilization cost of all edge nodes is given as:
 
\begin{equation}\label{eq:avgcost}
C = \frac{1}{N_E} \sum_{e\in \mathcal{N_E}}z^e
\end{equation}

The cost of every edge node $e$ is related to its memory utilization $u_e$.  Similarly to \cite{carpio2018balancing}, the utilization cost follows an exponential function $\phi(.)$ of the utilization. We denote by $\mathcal{Y}$ a  set of linear functions tangent to $\phi(.)$. Using the set of linear functions $\mathcal{Y}$, we approximate the utilization cost as follows:
\begin{equation}
\phi(x) = \max_{y\in \mathcal{Y}}y(x), \forall x\in \mathbb{R}
\end{equation} 
The following constraints insure that the variable $z^e$ gets a value approximately equal to $\phi(u_e)$.
\begin{equation}\label{const:cost}
z^{e} \geq y (u_e) , \forall e \in \mathcal{N_E}, \; \forall y \in \mathcal{Y}
\end{equation}

\begin{equation}\label{const:util}
u_e = \frac{1}{C^e}\sum_{m\in \mathcal{M}}\sum_{v\in \mathcal{V}_m}R_{mv}n_{mv}^{e} \leq 1 ,\; \forall e \in \mathcal{N_E}
\end{equation}
Constraint (\ref{const:util}) shows the definition of an edge node utilization $u_e$, as the sum of utilization of all possible model-variants $mv$ i.e. in terms of required memory per loaded model. Then we divide the obtained sum by the memory capacity of a node (mainly GPU memory). 

\subsection{Loading and Scaling Model}

A model-variant can be loaded in a specific node to serve requests coming from users (IoT nodes). When the requests load increases the deployed model-variant may not be able to serve users. Considering this scenario, the system may replicate the model instance on top of a new VM or container which scale up the throughput  (based on container technologies) or use a different model-variant with a bigger batch size (i.e has higher fps).\\
The following constraints can insure that the load on a specific model variant on a specific node cannot exceed the maximum load $Load_{mv}$. Moreover, by conserving the maximum load, this constraint can scale-up the number of model-variant replicas (called \textit{variant replication} in \cite{romero2019infaas}), or choose to use another model variant that has less inference latency (called \textit{variant upgrading} in \cite{romero2019infaas}) regarding the minimization of the average latency (eq. \ref{eq:avglatency}). 
\begin{equation}\label{const:loading}
\sum_{i\in \mathcal{N_I}}  r_{m}^{i}x_{mv}^{ie} \leq Load_{mv} n_{mv}^{e} ,\;\; \forall e \in \mathcal{N_E}, \forall m \in \mathcal{M}, \forall v \in  \mathcal{V}_{m}
\end{equation}
The following constraints assure that the number of model-variant instances is bigger than $1$ only if at least one node is sending inference requests.
\begin{equation}\label{const:repl}
n_{mv}^{e} \geq x_{mv}^{ie} \;\; \forall i \in \mathcal{N_I}, \forall e \in \mathcal{N_E}, \forall m \in \mathcal{M}, \forall v \in  \mathcal{V}_{m}
\end{equation}
The number of instances shared by a node can be at most $K$:
\begin{equation}\label{const:repl2}
n_{mv}^{e} \leq K \;\;  \forall e \in \mathcal{N_E}, \forall m \in \mathcal{M}, \forall v \in  \mathcal{V}_{m}
\end{equation}
The memory capacity constraints can be defined as follows:
\begin{equation}\label{const:memory}
\sum_{m\in \mathcal{M}} \sum_{v\in \mathcal{V}_m} R_{mv}n_{mv}^{e} \leq C^{e},\;\; \forall e \in \mathcal{N_E}
\end{equation}
\subsection{Problem Formulation}
The Model Variant Selection and Placement (MVSP) problem aims to minimize both the average latency per request (eq. \ref{eq:avglatency}) and the average utilization cost of edge nodes (eq. \ref{eq:avgcost}):
\begin{equation}\label{obj:1}\begin{split}
   \textbf{P : }  & \min \alpha L + (1-\alpha)C \\
s.t. & \sum_{v\in \mathcal{V}_{m}}\sum_{e\in \mathcal{N_E}}x_{mv}^{ie}=1,\;\;\forall i\in \mathcal{N_I},\;\; \forall m\in \mathcal{M}\\
     & \text{Constraints: } (\ref{cons:latencyreq}),(\ref{const:cost}),(\ref{const:util}),(\ref{const:loading}), (\ref{const:repl}), (\ref{const:repl2}),(\ref{const:memory})
\end{split}
\end{equation}
where $\alpha$ denotes weight of the average latency in the objective function.
The first constraint in the problem (\ref{obj:1}) insures that a request from a specific IoT node can be processed only by one edge node. Constraint (\ref{cons:latencyreq}) assures that RTT cannot exceed the maximum tolerated latency. Constraint (\ref{const:cost}) and (\ref{const:util}) are used to compute utilization cost per edge node. Constraint (\ref{const:loading}) insures that the load assigned to a specific model-variant deployed in a specific edge node does not exceed its maximum processing capacity. Constraints (\ref{const:repl}) and (\ref{const:repl2}) defines the values of binary and integer variables used in the model. Constraint (\ref{const:memory}) assures the satisfaction of memory capacity per edge node.
\subsection{Complexity Analysis}
\begin{thm}
MVSP problem is NP-hard.
\end{thm}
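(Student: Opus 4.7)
The plan is to establish NP-hardness by a polynomial-time reduction from the \textsc{Bin Packing} decision problem, which is (strongly) NP-hard. Given a Bin Packing instance with items of integer sizes $s_1,\dots,s_n$, bin capacity $B$, and target number of bins $k$, I would build an MVSP instance with a single IoT node ($N_I=1$), $N_E=k$ edge nodes of memory capacity $C^e=B$, and $M=n$ models, each carrying a single variant ($V_m=1$) of memory footprint $R_{m1}=s_m$, request rate $r_m^1=1$, and loading capacity $Load_{m1}=1$. All link delays $d_l$, exclusive inference latencies $L_{mv}^e$, and co-location coefficients $\alpha_{mv}$ are set to zero; the latency budgets $Lmax_m^1$ and the sharing cap $K$ are taken sufficiently large (e.g.\ $K=n$); and the objective weight $\alpha$ is set to $1$ so the utilization term vanishes. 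This construction is clearly polynomial in the Bin Packing input.

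Next I would verify the equivalence. The forwarding equality $\sum_{v,e} x_{mv}^{ie}=1$ together with (\ref{const:repl}) forces, for each model $m$, at least one edge node $e$ with $n_{m1}^e\geq 1$. Since $Load_{m1}=r_m^1=1$, a single instance per used model already suffices via (\ref{const:loading}), and any extra instance only tightens (\ref{const:memory}) without aiding feasibility, so we may restrict attention to $n_{m1}^e\in\{0,1\}$. The memory constraint (\ref{const:memory}) then reduces to $\sum_{m:\,n_{m1}^e=1} s_m \leq B$ for every $e$, which is exactly the Bin Packing feasibility condition for $k$ bins of capacity $B$. Because every latency term is zero, (\ref{cons:latencyreq}) is trivial and the average latency $L$ in (\ref{eq:avglatency}) equals zero, so the decision version ``does a feasible $(x,n,z)$ exist with objective $\leq 0$?'' is answered yes if and only if the Bin Packing instance admits a packing into $k$ bins.

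The step I anticipate as the main obstacle is confirming that none of the auxiliary MVSP constraints accidentally change the reduction: the tangent-line cost inequalities (\ref{const:cost}) and the utilization definition (\ref{const:util}) require only that $z^e$ dominate finitely many linear functions evaluated at $u_e\in[0,1]$, which is always achievable by setting $z^e=\phi(u_e)$ and contributes $0$ to the objective under $\alpha=1$; the sharing bound (\ref{const:repl2}) holds trivially for $K=n$; and (\ref{const:loading}) collapses to $n_{m1}^e\geq x_{m1}^{1e}$, which is (\ref{const:repl}) itself. With these checks in place, a polynomial-time algorithm for MVSP would decide Bin Packing in polynomial time, completing the reduction and proving that MVSP is NP-hard.
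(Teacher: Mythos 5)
Your proof is correct, but it takes a genuinely different route from the paper. The paper argues by identifying two NP-hard special cases obtained by relaxation: dropping the assignment side leaves a model-variant allocation problem equivalent to two-dimensional bin packing under the capacity constraint (\ref{const:memory}), and dropping the placement variables $n_{mv}^{e}$ while keeping (\ref{const:loading}) leaves an inference assignment problem equivalent to the Generalized Assignment Problem. You instead give a single explicit polynomial-time reduction from one-dimensional \textsc{Bin Packing}, zeroing out all latency terms and the cost weight so that feasibility of the constructed MVSP instance coincides exactly with packability into $k$ bins; your careful pass over the auxiliary constraints ((\ref{const:cost}), (\ref{const:util}), (\ref{const:repl2}), and the collapse of (\ref{const:loading}) into (\ref{const:repl})) is exactly the bookkeeping the paper leaves implicit. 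Your version is more self-contained and rigorous --- it even shows that mere feasibility checking for MVSP is NP-hard --- whereas the paper's sketch, though looser, conveys the additional structural insight that hardness arises independently from both the placement side and the request-assignment side; your reduction exercises only the former. Either argument suffices for the theorem.
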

\begin{proof}
MVSP is a mixed integer program with quadratic terms in the objective and in the constraints which is complex to solve. The quadratic terms can be linearized using \textit{standard linearization techniques} presented in \cite{glover1974converting} to obtain a solvable MILP. MVSP is NP-hard because it combines two NP-hard problems which are the model-variant allocation  problem  and the inference assignment problem. The model-variant allocation  problem can be obtained by a model relaxation which minimizes the cost under the capacity constraint (\ref{const:memory}). The problem is equivalent  to a two-dimensional bin-packing problem \cite{pisinger2007using}, where edge nodes are the bins and the DNN model-variants are the objects to pack. The inference assignment problem can be obtained by relaxing the model $P$: we keep the constraint (\ref{const:loading}), remove the variables $n_{mv}^{e}$, and minimize the average latency. The problem is equivalent to the Generalized  Assignment  Problem, which is NP-hard \cite{yagiurageneralized}.
\end{proof}

\begin{table}[htbp]\caption{Notations}
\begin{center}
\begin{tabular}{r c p{6cm} }
\toprule
\multicolumn{3}{c}{\textbf{Sets}}\\
$\mathcal{N_I}$ & & Set of IoT nodes\\
$\mathcal{N_E}$ & & Set of Edge nodes\\
$\mathcal{M}$ & &Set of models\\
$\mathcal{V}_{m}$ &  & Set of variants of model $m,m\in M$.\\

\multicolumn{3}{c}{\textbf{Parameters}}\\
 
$IL_{mv}^{e}$& &Inference Latency of a request  on variant $v$ of model $m$ in node $e$\\
$P_{ie}$ & & The set of links in the shortest path between the node $i$ and $e$  \\
$CL^{ie}$ & & Communication Latency from node $i$ to node $e$ \\
$\alpha_{mv}$ & & Interference Coefficient of variant $v$ of model $m$ co-located with other model-variants.  \\
$r_{m}^{i}$ &  & request rate from node $i$  on model $m$ \\
$Load_{mv}$ &  & Maximum load on model variant $mv$ \\
$Lmax_{m}^{i}$ &  & Inference Latency requirement of requests on model $m$ from node $i$  \\
$R_{mv}$ & & Memory required for loading the variant $v$ of model $m$\\
$C^{e}$ & & Memory capacity of node $e$ \\
\multicolumn{3}{c}{\textbf{Decision Variables}}\\
$x_{mv}^{ie}$ & $=$ & \(\left\{\begin{array}{rl}
1,  & \text{if variant $v$ of model $m$ is located in node} \\
    & \text{$e$ and serving requests from node $i$}\\
0,  & \text{otherwise} \end{array} \right.\)\\
$n_{mv}^{e}$ & $\in $ & $\mathbb{N}, $ Number of deployed instances of model variant $mv$ in node $e$ \\
$z^{e}$ & $\in $ & $\mathbb{R^{+}}, $ Utilization cost of node $e$\\
$K$ & & Maximum number of model-variants per edge node\\
$\alpha$ & & Weight of the average latency in the objective function\\
\bottomrule
\end{tabular}
\end{center}
\label{tab:TableOfNotationForMyResearch}
\end{table}

\subsection{Discussion on Resource Sharing} \label{sec:DRS}

For inference serving systems which deploy ML models, devices like GPU, TPU, dedicated accelerators are used due to their high performances and as of now most of them work exclusively for one ML model at a time \cite{romero2019infaas}. In the literature, recent works have been proposed. Google Research has adapted DNN inference to run on top of mobile GPU \cite{lee2019device}. Similarly, Amazon Web Services proposed a solution to run inference models on integrated GPUs at the edge \cite{wang2019unified}. But for this kind of applications, besides running inference models on GPU accelerators, it is necessary to consider GPU sharing as well, in order to allocate resources efficiently, opening another area of research. This approach is set to improve on the low utilization and scaling performances of unshared access to a GPU. That idea of GPU sharing can be promising as seen in \cite{jain2018dynamic}, where authors studied the performance of temporal and spacial GPU sharing and \cite{gu2019tiresias}, which presented a GPU cluster manager enabling GPU sharing for DL jobs.\\

The resource sharing such as previously mentioned GPU sharing impacts the inference analysis. The resource allocation required to deploy ML algorithms is complex task, especially in edge computing.  To this end, the emerging new container-based lightweight virtualization technologies allow for separating the model instances that would run in parallel in the same machine. In general this means that resource management systems can scale-up and scale-down allocated resources based on the load variation using these new virtualization technologies. How to effectively share resources across various ML models is an open issues, not only in the context of scalability but also due to the additional latency in ML Inference. As an example, studying the impact of GPU sharing on the performance of ML models is highly important especially on how to scale-up and scale-down resources and how to choose the best model-variant. Due to the complexity of the GPU analysis, which requires a detailed study of numerous existing benchmarks with different ML models, different batch sizes, and GPU memory limitations for our application interference model, in this paper we only use a simplified analysis of the effects that replication and co-locating of model-variant can have on the inference latency.

To this end, we propose one scenario for calculating the inference latency due to resource sharing. We assume parallel usage of hardware in terms of resource sharing. This approach still remains unexplored in edge computing. For model simplicity, we consider that the inference latency of a model-variant increases linearly in terms of the latency of co-located model-variants.  

\begin{equation}
L_{mv1}^{mv2,...,mvK}=L_{mv1} + \sum_{k=2}^{K}\alpha_{k}L_{mvk}
\end{equation}

where $L_{mv1}^{mv2,...,mvK}$ is the inference latency of model-variant $mv1$ in presence of $mv2$,...,$mvK$ and $L_{mv1}, L_{mv2}$ are the inference latency of $mv1$ and $mvk$ running exclusively in the device, respectively. $\alpha_{k}$ is a coefficient called \textit{interference coefficient}  of model variant $mv1$ in presence of model variant $mvk$. This coefficient is introduced to estimate the latency of co-located models in terms of the latency of models running exclusively in the hardware. 

\section{Numerical Analysis} \label{sec:results}
In this section, we evaluate our optimization model using two problem instances $P_1$ and $P_2$, based on  MANIAC  mobile  ad hoc network. Table \ref{tab:t} shows the topology of each studied problem. For each problem, we choose the DNN models randomly from a pre-defined list. The communication latency is obtained from \cite{qin2018sdn}, which was estimated to have a random value with an average  $12.23\; ms$ per link. The inference latency of each model-variant was measured on a GPU GTX 1050 Ti using tensorflow framework. As described in subsection \ref{sec:DRS} we consider the case in which the inference latency increases linearly in terms of the latency of co-located model-variants. 

\begin{table}\caption{Topologies}
\vspace{-0.4 cm}
\begin{center}\label{tab:t}
\begin{tabular}{lllll}
\hline
Problem    & $N_I$ & $N_E$ & $M$ & $V_m$  \\
\hline
P1         &  10   &   5   & 3  & 8   \\
P2         &  11   &   4   & 4  & 8   \\
\hline
\end{tabular}
\end{center}
\end{table}
\vspace{-0.2 cm}
We set weight $\alpha$ equal to $0.1$, node memory capacity $C^e=0.1,\;\; \forall e\in \mathcal{N_E}$ and interference coefficient $\alpha_{mv}=0.1,\;\; \forall m\in \mathcal{M}, \; \forall v\in \mathcal{V}_{m}$. We test our optimization model using different request rates, which correspond to the average of requests per node (e.g $E(r)=5.5$ where $r$ denotes the random variable of request rates). Table \ref{tab:t2} shows the optimization results of latency and cost for two problems with assumed GPU sharing and $K$=2 co-located model-variants. By modifying parameters reported in this table we want to observe the response of the analysed network in terms of latency, cost and utilization. This includes latency measurements for different number of co-locations, as well as measuring the inference load impact on latency, cost and utlization, reported in following subsections. 

\begin{table}\caption{Average latency and cost for $\mathcal{K}$=2  co-located model-variants}
\vspace{-0.3 cm}
\begin{center}\label{tab:t2}
\begin{tabular}{llllllll}
\hline
\multirow{2}{*} &
      \multicolumn{2}{c}{$E(r)=5.5$} &
      \multicolumn{2}{c}{$E(r)=22$} &
      \multicolumn{2}{c}{$E(r)=33$} \\
      
Latency & C &  L  &  C &  L &  C &  L  \\
\hline
$P1$ &
\multicolumn{6}{c}{($\alpha=0.1$, $\alpha_{mv}=0.1$, $C^e=8$, $K=2$)} \\
\hline
  &    0.055 &  17.66  & 0.083 & 18.65 & 0.112 & 37.10\\
\hline
$P2$ &
\multicolumn{6}{c}{($\alpha=0.1$, $\alpha_{mv}=0.1$, $C^e=8$, $K=2$)} \\
\hline
  & 0.051  &  13.38 &  0.051  & 13.43 & 0.06 &14.10 \\
\hline
\end{tabular}
\end{center}
\end{table}
\vspace{-0.2 cm}
\subsection{Impact of the number of co-location}

In our model, we set the maximum number of co-located DNN model-variants $K$ ($K=1$ means that no GPU sharing is allowed). We set the configuration parameters similar to the previous experiment of $P_1$, and we vary the number of co-location $K$ from 1 to 4. Figure \ref{fig:nb-colocation} shows that the average latency decreases with an increase of $K$ until it reaches a maximum value. For our use case, the studied network is small, which allows the optimization to converge when $K$ is equal to 4.  Increasing the $K$ value in this network does not further improve the results, but would be interesting to test in a larger network. For low load increasing model co-location decreases the average latency by 33\% of millisecond-scale per request, and for high load, by  21\%. This result proves that GPU sharing can improve the average latency of inference requests. Decreasing the inference latency by optimally managing the DNN model placement is an interesting result because it allows the system to satisfy latency-critical applications like augmented reality and online games. 

In this paper we used a simplified analysis of the effects that replication and co-locating of model-variant can have on the inference latency. This will be extended in future work to include different scenarios of how the inference latency could potentially behave as a result of resource sharing.
\begin{figure}[t]
 \centering
    \includegraphics[width=0.4\textwidth]{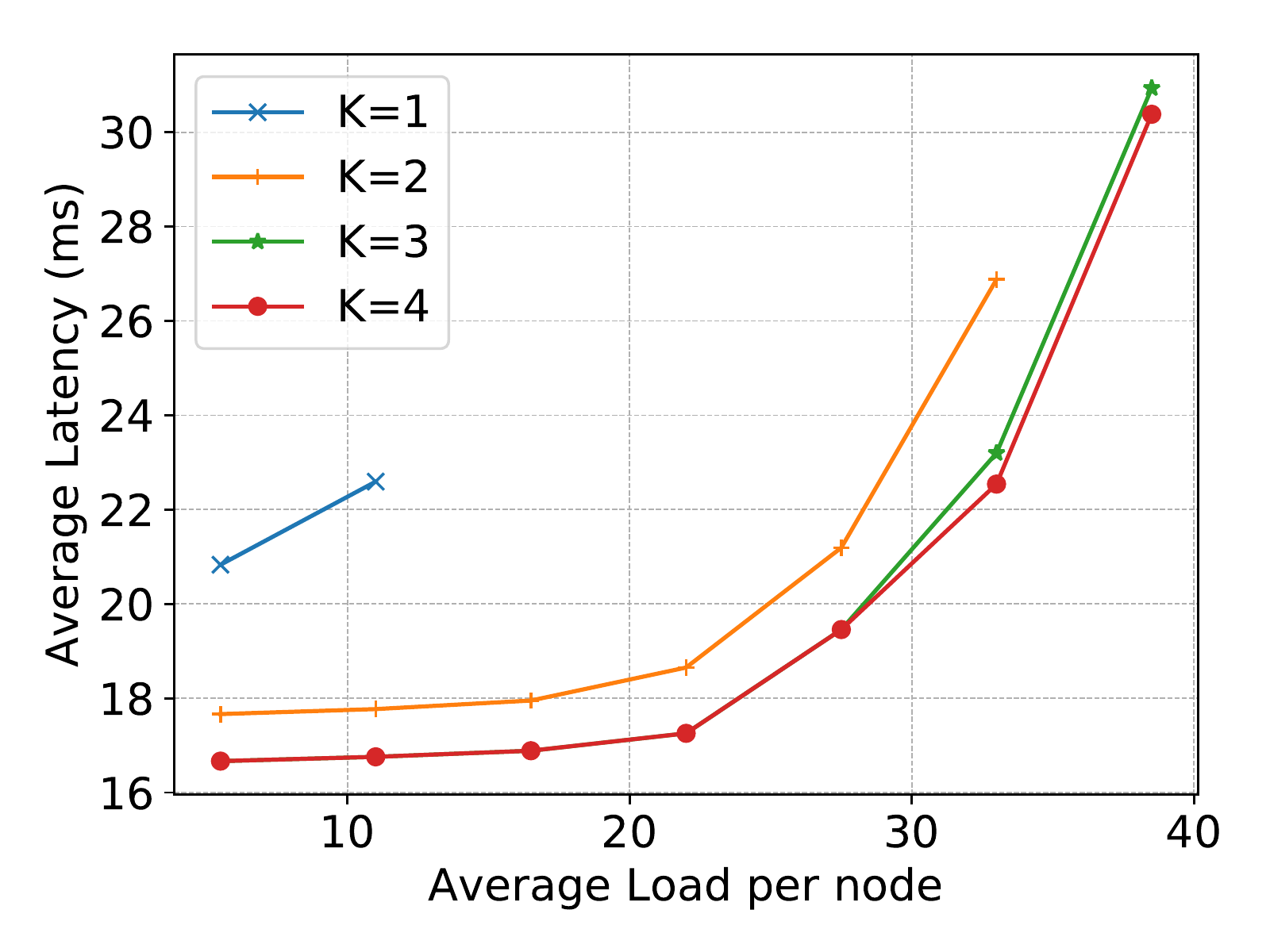}
  \vspace{-0.3 cm}
 \caption{Impact of the number of co-location on latency}
\label{fig:nb-colocation}
\vspace{-0.3 cm}
\end{figure}
\subsection{Impact of Inference Load}
We set the configuration parameters similar to the experiment of
 $P_1$ in \ref{sec:results}, $C^e=5$, and we vary the average load per node. 
 Figure \ref{fig:load} shows that the average latency  varies slightly for low loads while the cost is linearly increasing: when the average load per node increases from 5.5 to 22 (300\% increase), the average latency increases from 18 ms to 19 ms (5.5\% increase), however the cost increases from 0.05 to 0.20 (300\% increase), and the utilization from 38\% to 63\% (65\% increase). This result means that the optimization tends to keep the allocation decision of DNN models while upgrading its variant type to bigger ones, which have higher throughput and higher memory size. Then, when the load is high, the average latency starts to sharply increase: when the average load per node increases from 22 to 33 (50\% increase), the average latency increases from 19 ms to 27 ms (42\% increase), the cost increases from 0.2 to 0.4 (100\% increase), and the utilization from 63\% to 76\% (20\% increase). These results mean that the optimization tends to satisfy inference requests by allocating new model-variants in distant edge nodes that have enough capacity to host the instances. 
\begin{figure}[t]
 \centering
    \includegraphics[width=0.4\textwidth]{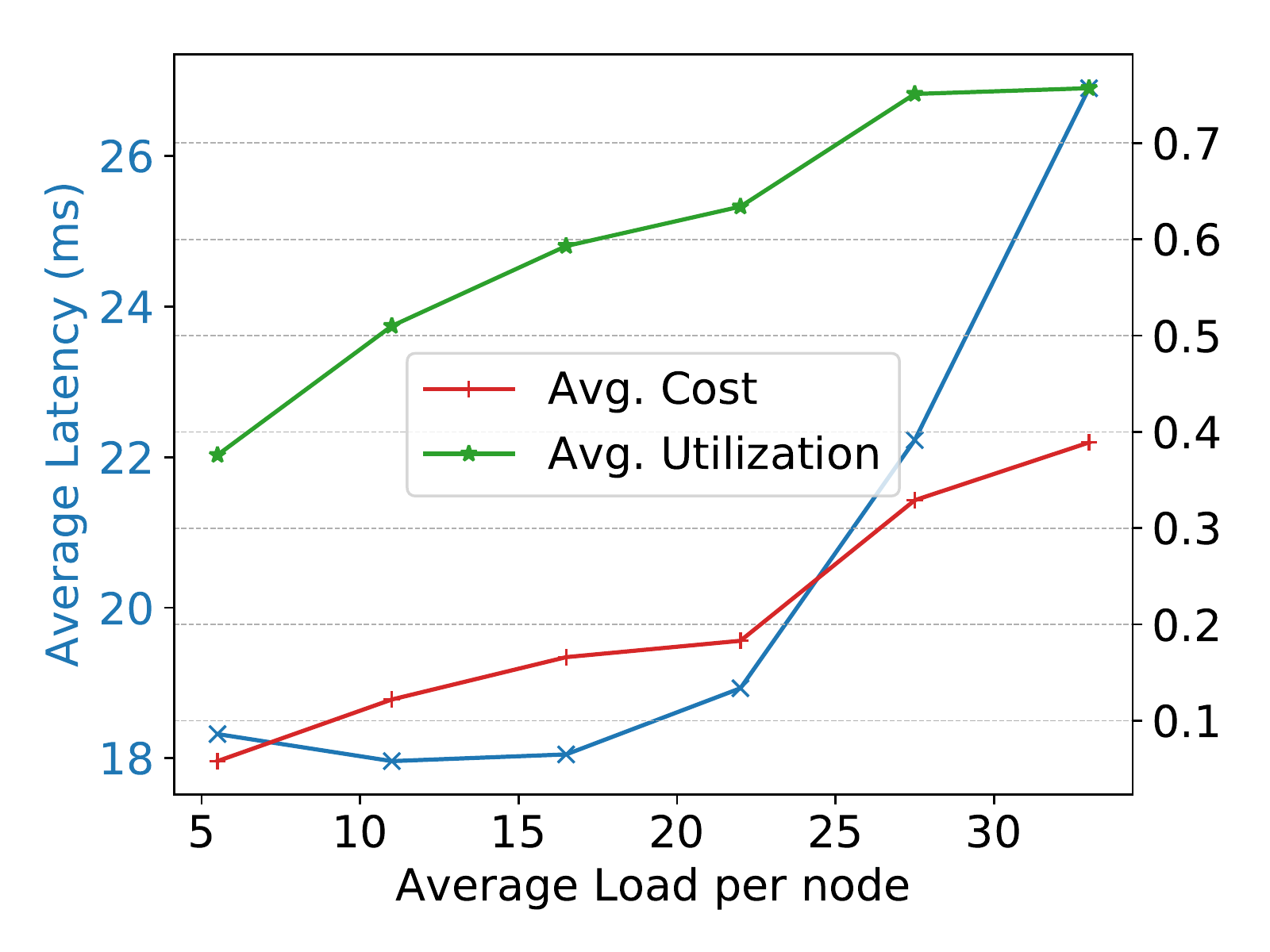}
  \vspace{-0.3 cm}
 \caption{Impact of inference request load on latency and cost in MANIAC network.}
\label{fig:load}
\vspace{-0.2 cm}
\end{figure}

\subsection{Trade-off between the Average Latency and  Cost Function}
We set the configuration parameters similar to the previous experiment, setting the average request load to 27.5, and we alter the weight of the latency and cost in the 
 objective function to evaluate the trade-off between the two objectives. 
 Figure \ref{fig:alpha} shows the opposite behavior of the average latency 
 and the average utilization cost. When we consider only the cost ($\alpha=0$), 
 the optimization tends to allocate the least possible number of instances of each model that can satisfy inference requests. This results in a high value of latency due to the assignment of IoT nodes to distant edge nodes: average latency is 31 ms, average utilization is 67\% and average cost is 0.23. When we increase 
 the value of $\alpha$, we consider more the latency in the decision making. An increase in $\alpha$ cause an increase in the value of the cost and a reduction in the average latency, until a maximum value in which the two objectives converge ($\alpha = 0.2$). The intersection of Pareto optimal curves, i.e. curves of the two goals: average latency and average cost, happens when $\alpha$ is equal to 0.04 with latency equal to 22.7 ms, cost equal to 0.3 and utilization equal to 70\%. Setting the configuration at the intersection point of goals, decreases the cost by 40\% compared to a configuration that set a slightly higher value of $\alpha$. It is worth it to mention that we show the average utilization curve in the figure because it represents a significant metric while it cannot replace the average cost as the intersection between goals is different than the intersection between latency and utilization. 
 The optimization tends to allocate multiple instances from the same model on different edge nodes (possibly with different variant types in each edge node depending on the possible capacity).

\begin{figure}[t]
 \centering
    \includegraphics[width=0.4\textwidth]{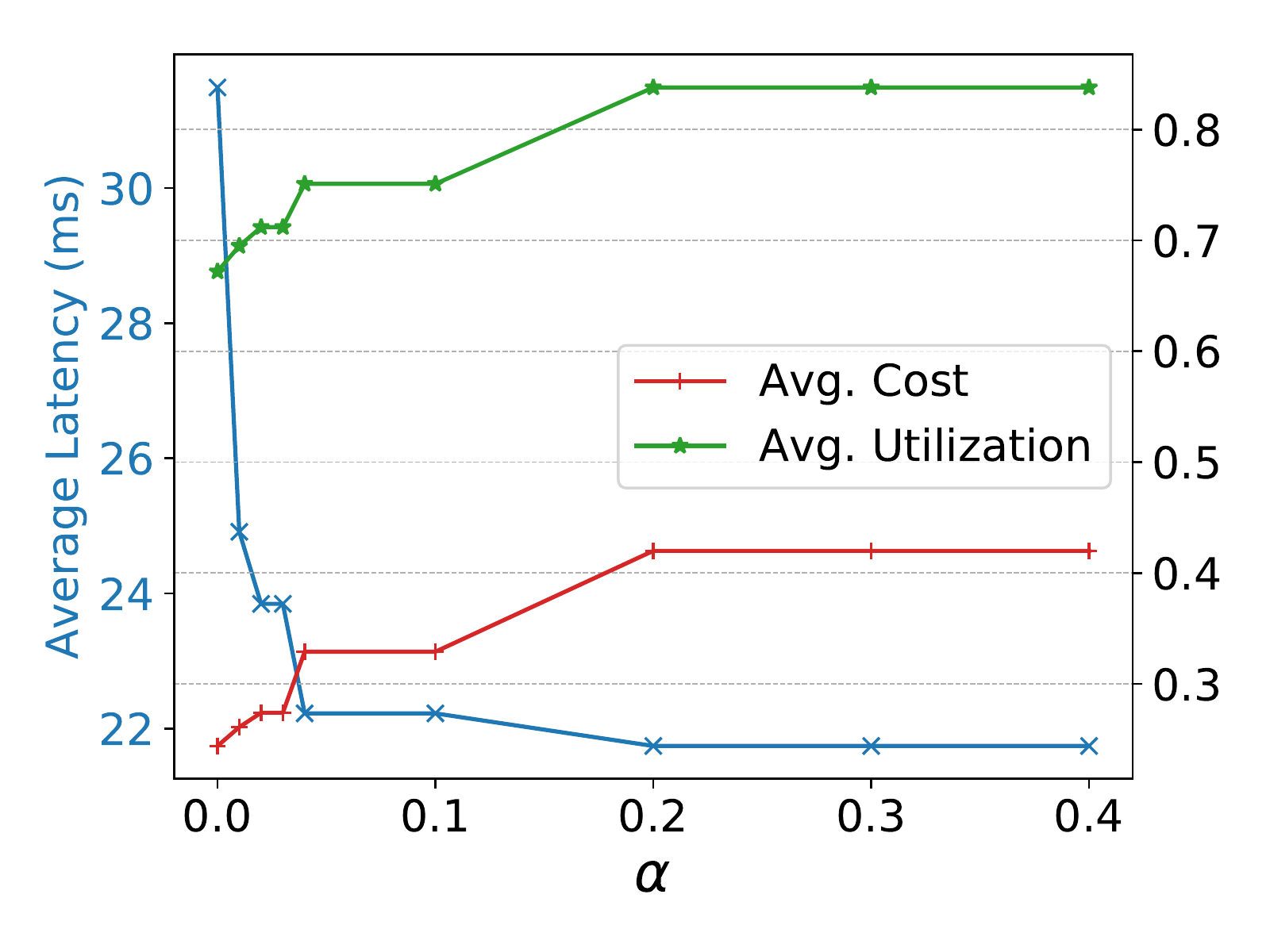}
  \vspace{-0.2 cm}
 \caption{Trade-off between the latency and  cost in MANIAC network.}
\label{fig:alpha}
\vspace{-0.2 cm}
\end{figure}

\section{Conclusion} \label{sec:conclusion}

In this paper, we studied the DNN Model Variant Selection and Placement (MVSP) in edge computing networks. A mathematical model was proposed to formulate the problem  considering the inference latency of different model-variants, the communication latency between nodes, and the utilization cost of edge computing nodes (resources).
Also, we considered the effects of hardware sharing on inference latency regarding GPU edge computing nodes shared between different model-variants. We studied the placement  results of the optimization and its effect on the average latency and cost. We showed that GPU sharing is a valuable approach to handle the increase of inference request rate. Results show that: for low load increasing model co-location decreases the average latency by 33\% of millisecond-scale per request, and for high load, by  21\%. We plan to further work on extending our model to consider more GPU sharing scenarios, and analyze parameters like multiple frameworks, and multiple hardware devices, as well as implementing heuristic solutions.

\bibliographystyle{IEEEtran}
\bibliography{SecurityBib}

\end{document}